\newif\ifFull
\newcommand{\Get}{{\textsf{get}}}
\newcommand{\Put}{{\textsf{put}}}
\newcommand{\Read}{{\textsf{read}}}
\newcommand{\Write}{{\textsf{write}}}
\newtheorem{theorem}{Theorem}
\newtheorem{lemma}[theorem]{Lemma}
\begin{document}

\ifFull\else
\copyrightyear{2017} 
\acmYear{2017} 
\setcopyright{acmlicensed}
\acmConference{WPES'17}{October 30, 2017}{Dallas, TX, USA}\acmPrice{15.00}\acmDOI{10.1145/3139550.3139558}
\acmISBN{978-1-4503-5175-1/17/10}


\begin{CCSXML}
<ccs2012>
<concept>
<concept_id>10002978</concept_id>
<concept_desc>Security and privacy</concept_desc>
<concept_significance>500</concept_significance>
</concept>
</ccs2012>
\end{CCSXML}

\ccsdesc[500]{Security and privacy}

\keywords{ORAM, privacy, cloud storage}
\fi

\ifFull\else
\begin{abstract}
Algorithms for
oblivious random access machine (ORAM) simulation 
allow a client, Alice, to obfuscate a pattern of 
data accesses with a server, Bob, who is maintaining Alice's
outsourced data while trying to learn information about her data.
We present a novel ORAM scheme that 
improves the asymptotic I/O overhead of 
previous schemes for a wide range of size parameters for
client-side private memory and message blocks,
from logarithmic to polynomial.
Our method achieves statistical security for hiding Alice's access
pattern and, with high probability, achieves an I/O overhead that 
ranges from $O(1)$ to $O(\log^2 n/(\log\log n)^2)$, depending on these size
parameters, where $n$ is the size of Alice's outsourced memory.
Our scheme, which we call BIOS ORAM, combines multiple uses of B-trees with
a reduction of ORAM simulation to isogrammic access sequences.
\end{abstract}
\fi

\ifFull
\title{BIOS ORAM: Improved Privacy-Preserving Data Access \\ 
for Parameterized Outsourced Storage}
\else
\title[BISO ORAM]{BIOS ORAM: Improved Privacy-Preserving Data Access \\ 
for Parameterized Outsourced Storage}
\fi

\ifFull
\author{Michael T.~Goodrich \\
{University of California, Irvine} \\
{Dept. of Computer Science} \\
\texttt{goodrich@acm.org}
}
\else
\author{Michael T.~Goodrich}
\affiliation{
\institution{University of California, Irvine}
\streetaddress{Dept. of Computer Science}
\city{Irvine}
\state{CA}
\postcode{92697}
}
\email{goodrich@acm.org}
\fi

\date{}

\maketitle


\ifFull
\begin{abstract}
Algorithms for
oblivious random access machine (ORAM) simulation 
allow a client, Alice, to obfuscate a pattern of 
data accesses with a server, Bob, who is maintaining Alice's
outsourced data while trying to learn information about her data.
We present a novel ORAM scheme that 
improves the asymptotic I/O overhead of 
previous schemes for a wide range of size parameters for
client-side private memory and message blocks,
from logarithmic to polynomial.
Our method achieves statistical security for hiding Alice's access
pattern and, with high probability, achieves an I/O overhead that 
ranges from $O(1)$ to $O(\log^2 n/(\log\log n)^2)$, depending on these size
parameters, where $n$ is the size of Alice's outsourced memory.
Our scheme, which we call BIOS ORAM, combines multiple uses of B-trees with
a reduction of ORAM simulation to isogrammic access sequences.
\end{abstract}
\fi

\section{Introduction}
In \emph{outsourced storage} applications, a client,
Alice, outsources her data to a server, Bob, 
who stores her data and provides her with an interface to access
it via a network.
We assume that Bob is 
``honest-but-curious,'' meaning that Bob is trusted to keep Alice's
data safe and available, but he wants to learn as much
as he can about Alice's data.
For privacy protection, we also
assume that Alice encrypts her data by a semantically secure encryption
scheme so that each time she accesses an item then she securely re-encrypts it
before returning it to Bob's storage.
The remaining problem, then, is for Alice to obscure her data access pattern
so that Bob can learn nothing from her access sequence.

Fortunately,
there is a large and growing literature on algorithms for
oblivious random access machine (ORAM) simulation to obfuscate Alice's 
access sequence
(e.g., 
see~\cite{Goldreich:1996,dmn-11,Goodrich2011,Kushilevitz:2012,Stefanov:2013,boneh2011remote,Ohrimenko2014,Apon2014,Stefanov:2013:MOS:2508859.2516673,Shi2011,Goodrich:2012:POS,Goodrich:2011:ORS:2046660.2046680}).
Such ORAM simulation methods provide ways for Alice to 
replace each of the data accesses
in her algorithm, $\mathcal A$, 
with a sequence 
of accesses to her data stored with Bob so as to obfuscate
her original access sequence.
Ideally, such an ORAM scheme achieves \emph{statistically security},
which intuitively means that Bob is unable to determine any information about
Alice's original access sequence beyond its length, $N$,
and the size of her data set, $n$.
In addition to the parameters, $n$ and $N$, 
the following parameters are also important in this context:
\begin{itemize}
\item
$B$: The maximum number of words in a 
message block sent from/to Alice in one I/O operation. 
\item
$M$: The number of words in Alice's client-side private memory.
\end{itemize}

In this paper, we are interested in scenarios where 
$B$ and $M$ can be set to arbitrary values that 
are at least logarithmic in $n$, and at most a constant
fraction of $n$, so as to apply to a wide range of 
scenarios, while still applying to cases
where Alice would still be motivated to outsource her memory to Bob.
Formally, in an ORAM framework,
we assume that Alice's RAM algorithm, $\mathcal A$, 
indexes data using integer addresses
in the range
$[0,n-1]$, using the following operations:
\begin{itemize}
\item 
$\Write(i,v)$: Write the value $v$ into the memory cell indexed
by the integer, $i$.
Since $\mathcal A$ is a RAM algorithm, we assume $i$ and $v$ each fit in 
a single memory word.
\item
$\Read(i)$: Read and return the value, $v$, stored in the cell with
integer address, $i$. 
\end{itemize}
These operations are the low-level accesses that are issued during
Alice's execution of her RAM algorithm, $\mathcal A$.
The goal of
an ORAM simulation scheme is to allow Alice to perform her algorithm,
$\mathcal A$, but to replace each individual read or write operation
with a sequence of input/output (I/O)
messages exchanged between Alice and Bob so that Alice effectively
hides $\mathcal A$'s pattern of
data accesses, i.e., $\mathcal A$'s sequence of read/write operations.
Moreover, we would like to minimize the amortized number of such messages
exchanged between Alice and Bob while still preserving Alice's privacy.

A related concept is that of an \emph{oblivious storage} (OS),
e.g., see~\cite{Apon2014,boneh2011remote,Goodrich:2012:POS,Ohrimenko2014,%
Stefanov:2013:MOS:2508859.2516673,s6547114}.
In this framework, Alice stores a dictionary 
at the server, Bob, of size at most $n$, and her algorithm, $\mathcal A$,
accesses this dictionary using the following operations:
\begin{itemize}
\item 
$\Put(k,v)$: Add the key-value item, $(k,v)$.
An error occurs if there is already an item with
this key.
We assume here that each key, $k$, fits in a single memory word
and each value, $v$, fits in a message block of size $B$.
\item
$\Get(k)$: Return and remove the value, $v$, associated with the key, $k$. 
If there is no item in the collection with key $k$, then return 
a special ``not found'' value. 
\end{itemize}


Note that OS includes ORAM as a special case. For example, we can initialize 
$\mathcal A$'s memory, $A[0..n-1]$, as $\Put(i,A[i])$, for $i=0,\ldots,n-1$. 
Then we can perform any $\Write(i,v)$ as $\Get(i),\Put(i,v)$,
and we can perform any $\Read(i)$ as $v=\Get(i),\Put(i,v)$.
We assume that Alice's original 
access sequence has a given length, $N$, 
where $N$ is at most polynomial in $n$, and an OS scheme replaces
each such operation with a sequence of operations that obfuscate the original
access sequence.
Ideally, Bob should learn nothing about Alice's original access sequence,
which we formalize in terms of a security game.
Let $\sigma$ denote a sequence of $N$
$\Read$/$\Write$ operations (or $\Get$/$\Put$ operations).
An ORAM (resp., OS) scheme
transforms $\sigma$ into 
into a sequence, $\sigma'$, of access operations.
As mentioned above, we assume that each item is stored
using a semantically-secure encryption scheme,
so that independent of whether Alice wants to keep a key-value item
unchanged, the sequence $\sigma'$ involves 
always replacing anything Alice accesses
with a new encrypted value so that Bob is unable to tell if the underlying 
plaintext value has changed.
The security for an ORAM (or OS) simulation is defined
in terms of the following security game.
Let $\sigma_1$ and $\sigma_2$ be two different RAM-algorithm
or dictionary access sequences, of length $N$, 
for a key/index set of size $n$,
that are chosen by Bob and given to Alice.
Alice chooses uniformly at random one of these sequences and transforms
it into the access sequence $\sigma'$ according to her ORAM (or OS) scheme,
which she then executes.
Her ORAM (resp., OS) scheme is \emph{statistically secure} if 
Bob can determine which sequence, $\sigma_1$ or $\sigma_2$, Alice chose
with probability at most $1/2$.
This assumes that Bob learns nothing from the encryption of Alice's
data.

The \emph{I/O overhead}
for such an OS or ORAM scheme is a function, $T(n)$,
such that the total number of messages 
sent between Alice and Bob during the simulation of all $N$
of her accesses from $\sigma$ is $O(N\cdot T(n))$ with high
probability
(i.e., with probability at least $1-1/n^c$, for some constant $c>1$).
That is, the I/O overhead is the amortized expected number of accesses
to Bob's storage that are done to hide each of Alice's original accesses
in her algorithm, $\mathcal A$.
 Of course,
 we would like $T(n)$ to be as small as possible.

In this paper, we provide methods
for improving the asymptotic I/O overhead
for ORAM simulations by more than just constant factors.
The approach we take to achieve this goal is to first transform 
the original RAM access sequence, $\sigma$, into an intermediate OS sequence,
$\hat \sigma$, which has a restricted structure that we refer to as it
being \emph{isogrammic},
and we then efficiently implement an oblivious storage for
this isogrammic sequence,
$\hat \sigma$, transforming it 
into a final access sequence, $\sigma'$, by taking advantage
of this restricted structure.
We define a sequence, $\sigma=(\sigma_1,\sigma_2,\ldots,\sigma_N)$
of $\Put$ and $\Get$ operations
to be \emph{isogrammic}$\,$\footnote{An \emph{isogram} is a word,
  like ``copyrightable,'' without a repeated letter.
  \hfil\break E.g., see \url{wikipedia.org/wiki/Isogram}.}, 
for an underlying set of size, $n$,
if it satisfies the following conditions:
\begin{itemize}
\item
For every $\Get(k)$ operation, there is a previous $\Put(k,v)$
operation, with the same key, $k$.
\item
No $\Put(k,v)$ operation attempts to add
an item $(k,v)$ when the key $k$ is already in the set.
\item
The key, $k$, used in each $\Get(k)$ (resp., $\Put(k,v)$) operation contains
random component chosen independently and uniformly
of at least $\lceil \log n\rceil$ bits.
Thus, keys are unlikely to be repeated.
\end{itemize}

That is, each time we issue a $\Put(k,v)$ operation, the key, $k$, contains
a random nonce that is chosen independently from any of the previous random nonces
we chose for previous keys.
At a high level, then, our two-phase ORAM simulation scheme is 
surprisingly simple,
in that it combines the classic well-known 
B-tree data structure, which is ubiquitous in database applications,
with isogrammic OS.
That is, at a high level our scheme can be summarized as 
\[
\mbox{\bf B-trees} \ +\ \mbox{\bf Isogrammic-OS}\ \implies\ \mbox{\bf ORAM}.
\]
Thus, we call our scheme \emph{BIOS ORAM}.\footnote{Our scheme implements 
   an ORAM by a reduction to an isogrammic OS, i.e., by replacing a 
   sequence of $\Read$ and $\Write$ operations with a sequence of
   $\Get$ and $\Put$ operations. 
   If one desires a scheme that works entirely in the ORAM framework,
   one can, for example, implement these $\Get$ and $\Put$ operations using
   a cuckoo hash table, with $O(1)$ lookup times in the worst case and 
   $O(1)$ amortized insertion times with high probability (w.h.p.). This will result
   in a sequence of $\Read$ and $\Write$ operations and it does not reveal
   any information about Alice's original access sequence, since the
   $\Get$ and $\Put$ operations come from
   an OS.}
This approach allows us to achieve the main goals of this paper,
which is the design of ORAM simulation methods
that work for a wide range of values to the parameters
$B$ and $M$.
Moreover, we are able to use this approach to design
schemes that are statistically secure
and, w.h.p., have efficient I/O overhead bounds.

\begin{table*}[htb]
\ifFull\footnotesize\fi
\begin{center}
\sffamily
\begin{tabular}{|l|c|c|c|c|c|}
\hline\hline
\rule[-3pt]{0pt}{13pt}
\textbf{Method} & \textbf{Parameterizable?} & \textbf{Statistically Secure?} 
& $B$ & $M$ & \textbf{I/O Overhead} \\
\hline\hline
\rule[-3pt]{0pt}{14pt}
Goldreich-Ostrofsky~\cite{Goldreich:1996} & No & No & $\Theta(1)$ & $\Theta(1)$ &
$O(\log^3 n)$ \\
\hline
\rule[-3pt]{0pt}{14pt}
Kushilevitz {\it et al.}~\cite{Kushilevitz:2012} 
& No & No & $\Theta(1)$ & $\Theta(1)$ 
        & $O(\log^2 n/\log\log n)$ \\
\hline
\rule[-3pt]{0pt}{14pt}
Damg{\aa}rd {\it et al.}~\cite{dmn-11} & No & Yes & $\Theta(1)$ & $\Theta(1)$ &
$O(\log^3 n)$ \\
\hline
\rule[-3pt]{0pt}{14pt}
Supermarket ORAM~\cite{Chung2014} 
& No & Yes 
        & $\Theta(1)$
        & $\Theta({\rm polylog}\ n)$ 
        & $O(\log^2 n\log\log n)$ \\
\hline
\rule[-3pt]{0pt}{14pt}
Goodrich-Mitzenmacher~\cite{Goodrich2011} 
& Somewhat & No
& $\Theta(1)$ & $\Theta(n^{\epsilon})$ & $O(\log n)$ \\
\hline
\rule[-3pt]{0pt}{14pt}
Melbourne shuffle~\cite{Ohrimenko2014}
& Somewhat & No 
& $\Theta(n^\epsilon)$ & $\Theta(n^{\epsilon})$ 
& $O(1)$ \\
\hline
\rule[-3pt]{0pt}{14pt}
Path ORAM~\cite{Stefanov:2013}
& Yes & Yes 
        & $\omega(\log n)$
        & $\omega(B\log n)$ 
        & $O(\log^2 n/\log B)^\dagger$ \\
\hline\hline
\rule[-3pt]{0pt}{14pt}
\textbf{BIOS ORAM} 
        & \textbf{Yes} & \textbf{Yes} & $\Omega(\log n)$ & $\Omega(\log n)$ 
        & $O(\log^2 n/\log^2 B)$ \\
\hline\hline
\end{tabular}
\end{center}
\caption{\label{tbl:results} Our BIOS ORAM bounds (in boldface), 
compared to some of the asymptotically best previous ORAM methods, which
are distinguished between those results 
have parameterized message/memory sizes or not,
and those that are statistically secure or not.
The parameter $0<\epsilon\le 1/2$ is a fixed constant.
The above results that are not statistically secure assume the existence
of random one-way hash functions, i.e., they assume the existence of random
oracles.
\hfil\break
$^\dagger$The Path ORAM method~\cite{Stefanov:2013}
claims an I/O bandwidth of $O(\log n)$
blocks, but this assumes that blocks can contain the responses
of multiple back-and-forth messages;
hence, we use the above bound to characterize
the I/O overhead for Path ORAM, which counts the 
actual number of messages, each of size at most $B$.
}
\end{table*}

\subsection{Previous Related Results}
Work on 
ORAM simulation methods traces its origins to seminal work of
Goldreich and Ostrofsky~\cite{Goldreich:1996},
who achieve an I/O overhead of $O(\log^3 n)$ with $M$ and $B$ being
$O(1)$ using a scheme that fails with polynomial probability and is not
statistically secure.
Kushilevitz {\it et al.}~\cite{Kushilevitz:2012} 
improve the I/O overhead for ORAM with a constant-size client-side
memory to be $O(\log^2 n/\log\log n)$, albeit while still not achieving
statistical security.
Damg{\aa}rd {\it et al.}~\cite{dmn-11} introduce an ORAM scheme that is
statistically secure with an I/O overhead that is $O(\log^3 n)$, with
$M$ and $B$ being $O(1)$, that is, there method is not parameterized
for general values of $M$ and $B$.
Chung {\it et al.}~\cite{Chung2014} provide a statistically secure
ORAM scheme, which we are calling ``supermarket'' ORAM (due to its reliance on
an interesting ``supermarket'' analysis),
for the case when $B$ is $O(1)$ and $M$ is at least polylogarithmic, 
which has an I/O overhead of $O(\log^2 n\log\log n)$.
Unfortunately, these previous schemes do not apply
to parameterized scenarios with larger values for $B$, such as when $B$
is at least logarithmic, let alone for cases when $B$ is $\Omega(n^\epsilon)$,
for some constant $0<\epsilon\le 1$.
Interestingly, Goldreich and Ostrofsky~\cite{Goldreich:1996}
give a lower-bound argument that 
the I/O overhead for an ORAM scheme must be $\Omega(\log n)$ when $M$ is
$O(1)$, but their lower bound does not apply to 
larger values of $M$ and $B$.

There is previous work that is parameterized for larger values of $M$ and $B$,
however.
Goodrich and Mitzenmacher~\cite{Goodrich2011}
provide an ORAM simulation scheme that achieves
an $O(\log n)$ I/O overhead and constant-sized messages, 
but their method requires $M$ to
be $\Omega(n^\epsilon)$, for some fixed constant $0<\epsilon\le 1$,
which is not fully parameterized, e.g., for when $M$
is polylogarithmic.
Also, their method is not statistically secure.
Stefanov {\it et al.}~\cite{Stefanov:2013}
introduce the Path ORAM method,
which is statistically secure and parameterized for 
values of $B$ that are super-logarithmic and values of $M$ that are at least
a logarithmic factor larger than $B$,
to achieve an I/O overhead (in terms of the number of messages
exchanged between Alice and Bob) of $O(\log^2 n/\log B)$.
That is, their method also can match the logarithmic lower bound
of Goldreich and Ostrofsky, but 
with a scheme that requires both $M$ and $B$ to be $\Omega(n^\epsilon)$,
for a fixed constant $0<\epsilon\le 1$.
Ohrimenko {\it et al.}~\cite{Ohrimenko2014} present
an oblivious storage (OS) scheme (which, as we observed, can also be used
for ORAM simulation) that achieves an I/O overhead of
$O(1)$, but it is not statistically secure and it requires 
$M$ and $B$ to be at least $\Omega(n^\epsilon)$.

Wang {\it et al.}~\cite{Wang:2014} introduce an 
interesting ``oblivious data structure'' framework, which 
applies to 
bounded-degree data structures, such as search
trees, to achieve an $O(\log n)$ I/O overhead for data-structure
access sequences.
Their algorithms are based on (non-recursive) 
Path ORAM~\cite{Stefanov:2013}, however, which requires that
$M$ and $B$ be super-logarithmic, and, even then,
their method does not achieve an $O(1)$ I/O overhead,
even for larger values of $B$ and $M$.

In addition, there is a growing
literature on other ORAM and OS solutions, which is too large to review 
here
(e.g., see~\cite{Apon2014,Stefanov:2013:MOS:2508859.2516673,%
Shi2011,Goodrich:2012:POS,Goodrich:2011:ORS:2046660.2046680,%
Goodrich:2012:PGD:2095116.2095130,Ren:2013,s6547114,cryptoeprint:2014:997,%
Gentry2013,Devadas2016,Wang:2015}).
These results also optimize I/O overhead, but we are not aware
of any previous results that beat the asymptotic I/O bounds for the Path ORAM
scheme for a wide range of values of $B$ and $M$ while also achieving
statistical security for the simulation method.

\subsection{Our Results}
We provide a method for ORAM simulation, which we call BIOS ORAM,
that achieves statistical security and has efficient asymptotic I/O overheads
for a wide range of values for the parameters $B$ and $M$.
In particular, we show how to perform an ORAM 
simulation of a polynomial number of accesses to an outsourced storage
of size $n$ with
an I/O overhead that is $O(\log^2 n/\log^2 B)$, w.h.p.,
for $B$ and $M$ ranging from logarithmic to a fraction of $n$.
For example, we can achieve the following specific bounds, depending
on the values of $B$ and $M$:
\begin{itemize}
\item
When $B$ and $M$ are logarithmic or polylogarithmic in $n$,
we achieve an I/O overhead that is $O(\log^2 n/(\log\log n)^2)$, w.h.p.
\item
When $B$ and $M$ are only $\Omega(2^{\sqrt{\log n}})$,
we achieve an I/O overhead that is $O(\log n)$, w.h.p.
\item
When $B$ and $M$ are $O(n^\epsilon)$, for some constant $0<\epsilon\le 1/2$,
we achieve an I/O overhead that is $O(1)$, w.h.p.
\end{itemize}

We summarize our results in Table~\ref{tbl:results}, comparing them to 
some of best-known previous ORAM results.
Note, for example, that
our results apply to a wider range of values of the parameters
$B$ and $M$ than the Path ORAM scheme~\cite{Stefanov:2013} and improves the I/O
overhead for ORAM simulation over this entire range.
For example, the best I/O overhead that Path ORAM can achieve is 
$O(\log n)$ even when $B$ and $M$ are $O(n^\epsilon)$, whereas our BIOS ORAM
scheme achieves a constant I/O overhead in such scenarios.
In addition,
our I/O overhead bounds match those of the Melbourne shuffle for 
values of $B$ that are $\Theta(n^\epsilon)$ while also extending to values
of $B$ that are smaller than those possible using the Melbourne shuffle approach.
For example, as mentioned above,
if $B$ and $M$ are just $\Omega(2^{\sqrt{\log n}})$, then we achieve
an I/O overhead of $O(\log n)$, w.h.p., which is a result that is
not achievable using previous ORAM methods for such values of $B$ and $M$.

Our methods are remarkably simple and make
multiple uses of the ubiquitous B-tree data structure
(e.g., see~\cite{Comer:1979,Cormen:2009:IAT:1614191,Goodrich:2014:ADA}),
along with a reduction of ORAM simulations to isogrammic OS access 
sequences, as well as efficient ways of obliviously simulating
isogrammic access sequences (again, using B-trees).
In particular, we show how to implement an oblivious storage (OS)
scheme for any isogrammic access sequence so as to achieve a simulation
that achieves statistical security and has an I/O overhead that
is $O(\log n/\log B)$ with high probability.
In addition, we show how to apply this result to improve
the I/O overhead for oblivious tree-structured data structures, which improves
an oblivious data-structure bound of 
Wang {\it et al.}~\cite{Wang:2014} and may be of independent interest.

\section{An Overview of B-Trees}
As is well-known in database circles,
a B-tree
is a multi-way search tree, which stores internal nodes as blocks
so that its depth is $O(\log_B n)$,
e.g., see~\cite{Comer:1979,Cormen:2009:IAT:1614191,Goodrich:2014:ADA}.
In the B-trees we use in this paper, we choose a branching factor 
of 
\[
B'=B^{1/4},
\]
where $B$ is 
our message-size parameter.
Such a B-tree 
supports searching and updates (insertions and deletions) in 
$O(\log_{B'} n)=O(\log n/\log B')=O(\log n/\log B)$
I/Os of blocks of size~$B'$.
That is, each search or update involves accessing $O(1)$ nodes on each level
of the B-tree, in a root-to-leaf search followed (for updates)
by a leaf-to-root set of updates,
for which we refer the interested reader to known
methods for searching and updating B-trees
(e.g.,
see~\cite{Comer:1979,Cormen:2009:IAT:1614191,Goodrich:2014:ADA}).
%
From the perspective of the server, Bob, the I/Os for 
searching a B-tree would simply
look like Alice accessing $O(\log n/\log B')=O(\log n/\log B)$ 
blocks of storage.
(See Figure~\ref{fig:b-tree}.)

\begin{figure}[hbt!]
\begin{center}
\includegraphics[trim = 1.5in 2.1in 2.3in 3.4in, clip, width=3.4in]{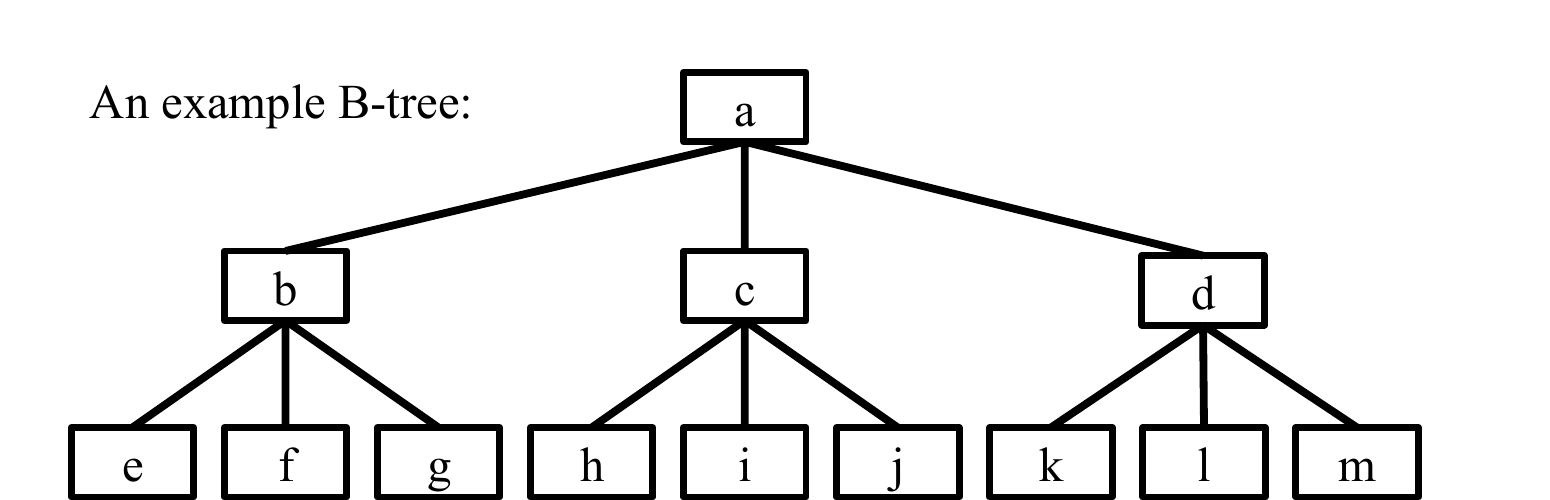}
\end{center}
\caption{\label{fig:b-tree} An example B-tree with branching factor 3.
\copyright~2017 Michael Goodrich. Used with permission.
}
\end{figure}

\section{B-Trees + Isogrammic OS = ORAM}
In this section, we describe the first component of
our \emph{BIOS} ORAM scheme, which is a reduction of ORAM simulation
to isogrammic OS, at the cost of increasing the I/O overhead of Alice's
accesses by a factor of
$O(\log n/\log B)$. 
That is, we show how to transform an arbitrary
sequence of $\Read$ and $\Write$ operations
into an isogrammic sequence of $\Get$ and $\Put$ operations, with a blow-up
in length of $O(\log n/\log B)$.
By then showing how to obliviously simulate an isogrammic access sequence with 
an I/O overhead of 
$O(\log n/\log B)$, we get the main result of this paper, 
that is, that we can achieve an ORAM scheme with an I/O overhead of
$O((\log n/\log B)^2)$.

Suppose, then, that Alice's RAM algorithm, $\mathcal A$,
which she wishes to perform on her data outsourced to Bob, uses a memory
of $n$ cells indexed by integers in the range $[0,n-1]$.
Let $R$ be a B-tree having each cell of Alice's storage 
stored in a sub-block of size $B'=B^{1/4}$ associated with a block
at a leaf of $R$ 
(ordered in the standard left-to-right fashion).
Furthermore, let $R$ have a branching
factor of $B'=B^{1/4}$, so each internal node in
$R$ can be stored in a single sub-block of size $B'$ and the depth
of $R$ is $O(\log n/\log B')=O(\log n/\log B)$.
Intuitively, the main idea of our reduction is that,
for each $\Write(i,v)$ or $\Read(i)$ operation in $\mathcal A$, 
we perform a search in
$R$ for the index $i$, to find the sub-block containing memory cell, $i$,
and then we replace this sub-block of size $B'$ and 
all the nodes of size $B'$ that we just traversed with new nodes.

Initially, we construct $R$ in a bottom-up fashion so that each node $u$ in $R$
is assigned a random nonce, $r_u$, of $\lceil\log n\rceil$ bits.
For each leaf, $u$, of $R$, which is storing some block, $V$, of $B$ 
values for the cells in Alice's storage
for some set of indices, $\{i,i+1,\ldots,i+{B'-1}\}$, 
we issue a $\Put(k,v)$ operation, where $k=(r_u,u)$ and
$v=(i,V)$.
In addition, we (obliviously) store $r_u$ at $u$ in $R$ on the server.
Note that $v$ fits in a single block of size $B$.
For each internal node, $u$, of $R$, which we construct level-by-level, so that we
can obliviously read in the random nonce, $r_{u_i}$,
 associated with each child, $u_i$, of $u$ in $R$,
then we assign $u$ a random nonce, $r_u$, and we issue a $\Put(k,v)$ operation,
where $k=(r_u,u)$ and
\[
v=(I,r_{u_1},u_1,r_{u_2},u_2,\ldots,r_{u_{B'-1}},u_{B'-1}),
\]
where $I$ is the block of key values needed to decide for any search which
child, $u_i$, of $u$ to access next.
Note that $v$ fits in $O(1)$ sub-blocks of size $B'$.
This initialization phase establishes 
the B-tree, $R$, in Bob's storage and issues
$O(n/B')$ $\Put$ operations that identify each node, $u$, of $R$ using a key
that comprises a random nonce, $r_u$, of $\lceil\log n\rceil$ bits.
We also keep a global variable, that maintains the random nonce for
the root.

For each $\Read(i)$ or $\Write(i,v)$ operation after this
initialization, we traverse a root-to-leaf path, $\pi$, in $R$ to the
leaf associated with the sub-block holding the cell $i$. 
This involves performing a sequence of $O(\log n/\log B)$ 
$\Get(k)$ operations, where each $k$ is a pair of a node name in $R$
and the random nonce for that node.
We cache the nodes returned by these $\Get$ operations 
in Alice's private memory.
Then, processing the nodes in $\pi$ in reverse order, we
give each node $u$ in this sequence a new random nonce, $r_u$,
and we issue a $\Put(k,v)$ operation,
where $k=(r_u,u)$ and
\[
v=(I,r_{u_1},u_1,r_{u_2},u_2,\ldots,r_{u_{B'-1}},u_{B'-1}),
\]
where $u_1,\ldots,u_{B'-1}$ are the children of $u$ in $R$.
We do this processing in reverse order so that when we issue
such a $\Put(k,v)$ operation, we will have available the 
new random nonce for the child, $u_j$, of $u$ that we previously
processed as well as the old (and still unused) random
nonces for the other children of $u$.
In this case, the nodes are B-tree nodes, which are of size $B^{1/4}$,
as we have defined this to be the branching factor for our B-tree.
Fortunately, our message block size, $B$, is large enough to store up
to $B^{3/4}$ such nodes in a single message block.

Each $\Read$ or $\Write$ 
involves issuing $O(\log n/\log B)$
$\Get$ and $\Put$ operations; hence, 
this increases the total number of I/Os for Alice's access sequence
by an $O(\log n/\log B)$ factor.
The important observation is that this process results
an isogrammic access sequence, 
since each key used in a $\Put(k,v)$ operation comprises a random
nonce of at least $\lceil\log n\rceil$ bits, each such key is not already in our
set (since each key also comprises a unique node name),
and each $\Get(k)$ operation is guaranteed to 
match up with a previous $\Put(k,v)$ operation.
Thus, we have the following.

\begin{theorem}
\label{thm:isogrammic-b-tree}
Given a RAM algorithm, $\mathcal A$,
with memory size, $n$,
we can simulate the memory accesses of $\mathcal A$ using an isogrammic access
sequence that initially creates $O(n/B')$ $\Put$ operations and then
creates $O(\log n/\log B)$ $\Get$ and $\Put$ operations 
for each step of $\mathcal A$.
Each key used in a $\Get$ or $\Put$ operation comprises a random
nonce of at least $\lceil \log n\rceil$ bits and each value used in a $\Put$
operation is a sub-block of size $O(B')$ words.
\end{theorem}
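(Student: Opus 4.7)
The plan is to verify the claimed bounds by analyzing each phase of the reduction described above and then confirming that each of the three isogrammic conditions holds.

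First, I would confirm the structural parameters of the B-tree $R$. Since $R$ has branching factor $B'=B^{1/4}$ and stores $n$ cells packed into leaves of capacity $B'$, it has $O(n/B')$ leaves and hence $O(n/B')$ internal nodes in total. Its depth is $O(\log_{B'} n) = O(\log n / \log B)$. The bottom-up initialization issues exactly one $\Put$ per node of $R$, giving the claimed $O(n/B')$ initial $\Put$ operations. For each subsequent $\Read(i)$ or $\Write(i,v)$, Alice walks the unique root-to-leaf path $\pi$ to the block holding cell $i$, issuing one $\Get$ per node along the way and then, processing $\pi$ in reverse, one $\Put$ per node; since $|\pi|=O(\log n/\log B)$, each step of $\mathcal A$ produces $O(\log n/\log B)$ $\Get$/$\Put$ operations.

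Next, I would verify the three isogrammic conditions. For the \emph{matching} condition, note that Alice always knows the current nonce $r_u$ for the root (cached in her private memory), and each $\Get((r_u,u))$ returns a value encoding the nonces of the children of $u$; thus by induction down $\pi$, every key $(r_u,u)$ passed to a $\Get$ has previously appeared as the key of a $\Put$, either during initialization or during the reverse-order $\Put$ pass of the most recent access that touched $u$. For the \emph{no-duplicate-key} condition, observe that the node name $u$ in the key $(r_u,u)$ distinguishes entries for different nodes, and the immediately preceding $\Get((r_u,u))$ removes the one existing entry for node $u$ before a fresh $\Put$ with a new nonce is issued; hence no collision with a live key from the same node can occur. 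A collision with a live key $(r_{u'},u')$ for some other node $u' \ne u$ is impossible because the $u$ coordinate differs. For the \emph{random nonce} condition, each $r_u$ is drawn independently and uniformly from $\{0,1\}^{\lceil \log n\rceil}$ at the moment a new $\Put$ for $u$ is performed, which is exactly what the definition requires.

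Finally, I would check the size bounds on the values. A leaf value $(i,V)$ consists of an integer index and $B'$ memory cells, fitting in $O(B')$ words; an internal-node value $(I,r_{u_1},u_1,\ldots,r_{u_{B'-1}},u_{B'-1})$ contains $O(B')$ routing keys, node names, and nonces, each $O(1)$ words, so it also fits in $O(B')$ words. Since $B' = B^{1/4}$, every such value fits in a single message block of size $B$, matching the claim.

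The only nontrivial point in the argument is verifying the no-duplicate-key condition, but once one observes that the node-name component $u$ of the key rules out inter-node collisions and the freshly drawn nonce together with the preceding $\Get$ rule out intra-node collisions, the rest of the proof is a direct tally of the operations generated by the construction already spelled out above.
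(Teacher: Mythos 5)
Your proposal is correct, and for everything the theorem literally asserts --- the $O(n/B')$ initialization cost, the $O(\log n/\log B)$ operations per access, the three isogrammic conditions, and the $O(B')$ value sizes --- it follows the same construction and essentially the same reasoning that the paper gives in the prose immediately preceding the theorem (your verification of the no-duplicate-key condition via the node-name component $u$ and the removal semantics of $\Get$, and of the matching condition via the reverse-order leaf-to-root $\Put$ pass, is exactly the paper's argument, just spelled out more carefully). The one divergence worth noting is that the paper's displayed proof of this theorem is devoted entirely to a security-game argument: it shows that the transformed sequence reveals nothing about the accessed index $i$ because every access touches a root-to-leaf path of fixed length $O(\log n/\log B)$ addressed by freshly drawn independent nonces, so the key sequence is distributed identically for $\sigma_1$ and $\sigma_2$. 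That claim is not literally in the theorem statement (it is consumed later when Theorem~\ref{thm:oram} invokes the ``security claims'' of this theorem), so your omission is defensible, but if you intend your proof to support the paper's later use of this result you should add the one-sentence observation that the distribution of keys issued is independent of which cells $\mathcal A$ accesses.
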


\begin{figure*}[hbt!]
\begin{center}
\includegraphics[trim = 0.5in 1.5in 0.5in 1.7in, clip, width=4.5in]{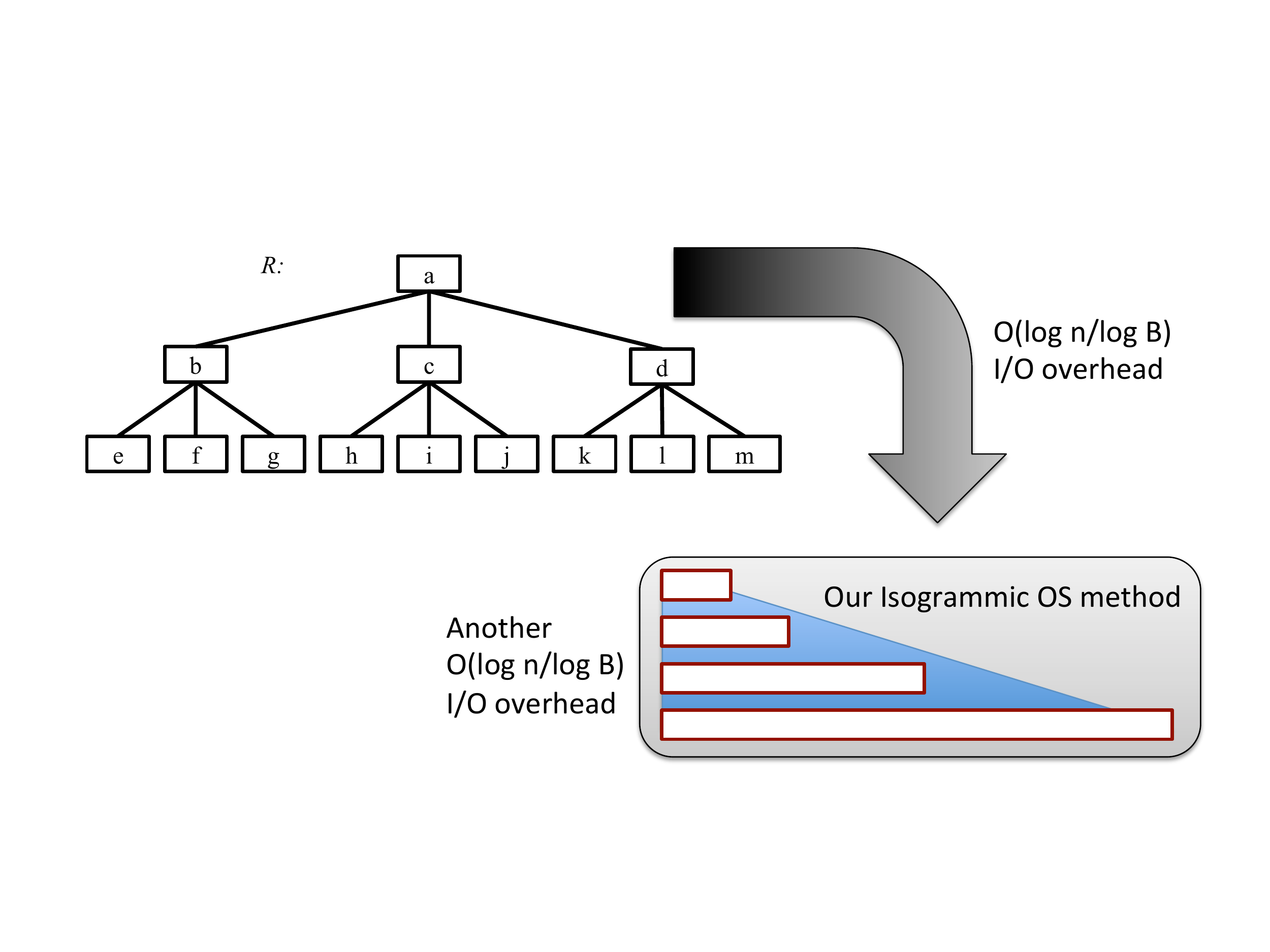}
\end{center}
\vspace*{-10pt}
\caption{\label{fig:top-level} A high-level view of 
our BIOS ORAM scheme.
\copyright~2017 Michael Goodrich. Used with permission.
}
\end{figure*}

\begin{proof}
For the security claim,
consider a simulation of the security game
mentioned in the introduction, assuming the statistical security for
our isogrammic OS.
Suppose, then, that
Bob creates two
access sequences, $\sigma_1$ and $\sigma_2$, and gives them to Alice,
who then chooses one at random and simulates it, as described above.
For each access to a memory index, $i$, in the RAM simulation for her
chosen $\sigma_j$,
the memory cell for $i$ is read and written to by doing a search in $R$.
The important observation is that this access consists
of $O(\log n/\log B)$ accesses a root-to-leaf sequence of nodes of $R$, indexed
by newly-generated independent random numbers each time.
Thus, nothing is revealed to Bob about the index, $i$.
That is, the number of accesses in Alice's simulation is the same
for $\sigma_1$ and $\sigma_2$, and the sequence of keys used is completely 
independent of the choice of $\sigma_1$ or $\sigma_2$.
Thus, Bob is not able to determine which of these sequences she chose
with probability better than $1/2$.
\end{proof}

As we show in the remainder of this paper, we can simulate an 
isogrammic access sequence obliviously in a statistically secure
manner with an I/O overhead that is 
$O(\log n/\log B)$ with high probability.
Combining this result with 
Theorem~\ref{thm:isogrammic-b-tree} gives us our claimed result 
that we can perform statistically-secure ORAM simulation with
an I/O overhead that,
with high probability,
is $O(\log^2 n/\log^2 B)$.
(See Figure~\ref{fig:top-level}.)

\section{B-tree OS for Small Sets}
\label{sec:b-tree}
In this section, we present an OS solution for small sets,
that is, sets whose size, $n$, is $O(B^{3/2})$, and small items,
that is, items whose size is $O(B')$ words, where $B'=B^{1/4}$.
This is admittedly a fairly restrictive scenario, but it 
nevertheless is a critical component of our isogrammic OS scheme.
We show how, in this scenario, to use a B-tree
to achieve statistical security for an OS simulation that works for
general sequences of $\Put$ and $\Get$ operations, not just
isogrammic sequences.
That is, in this subsection, we allow for keys that are not
necessarily random (so long as they are unique) and we allow
$\Get(k)$ operations to return ``not found'' responses.

Suppose, then, that we wish to support an oblivious storage (OS) for
a set of items that can be as large as $n=\Theta(B^{3/2})$.
In this case, we utilize a B-tree, $F$, with branching factor, 
$B'=B^{1/4}$, so its height is 
$O(\log n/\log B')=O(\log n/\log B)=O(1)$, since we are restricting
ourselves here to sets of size at most $O(B^{3/2})$ and items of size
$O(B')$.
Put another way, our restriction on the set size, $n$, implies that
$B$ is $\Omega(n^{2/3})$ and $B'$ is $\Omega(n^{1/6})$,
that is, that $B$ is $\Omega(B^{1/4}n^{1/2})$.

Our small-set B-tree OS method is a
modification and adaptation of the ``$\sqrt{n}\,$'' solution 
of Damg{\aa}rd {\it et al.}~\cite{dmn-11} to B-trees and the OS setting.
Let $F$ be our B-tree with capacity $n$ and branching factor $B'$;
hence, $F$ has depth $D=4\lceil\log n/\log B\rceil$,
with $n$ items stored in its leaves, and randomly shuffled in 
an array of Bob's storage of size $O(nB')$ memory words. 
Initially, we construct $F$ using an initial set of items, which could even
be empty. We pad the nodes of $F$ with empty dummy nodes, as necessary,
to make every node of $F$ have the same depth, $D$.
An internal B-tree node consists of $B'$ keys; hence, a single
message block can fit $B^{3/4}$ B-tree nodes or items (since items
are also of size $B'$ here), that is,
$\Omega(n^{1/2})$ B-tree nodes or items.
In addition, we also store a singly linked list, $\ell$, 
of $D\lceil\sqrt{n}\rceil$ nodes, which are the
same size as B-tree nodes and items and are randomly shuffled in with the 
B-tree nodes of $F$.

We can initialize $F$ in this way using the oblivious shuffling
method of Goodrich and Mitzenmacher~\cite{Goodrich2011}.
In this context, where we are obliviously sorting nodes and items
that are themselves of size $B'$,
their method involves a multi-way merging of sorted lists
with a branching factor of $(M/B')^{1/3}=\Theta(n^{1/6})$,
where the merging step involves 
a simple oblivious scanning of each of these arrays.
The merge in their method requires, in this context (where we are
merging nodes and items of size $B'=\Theta(n^{1/6})$) that there be 
\[
\Omega((M/B')^{2/3}B' + (M/B')^{1/3}(B')^2)
\]
values stored in memory at any given time,
which we can bound as $\Omega(n^{1/2})$ using the fact that, in this case,
\[
(M/B')^{2/3}B' + (M/B')^{1/3}(B')^2
\]
is at least
$
(n^{1/2})^{2/3}n^{1/6} + (n^{1/2})^{1/3}n^{1/3}$,
which equals $2n^{1/2}$.
Thus, we can scan each of the $\Theta(n^{1/6})$ sorted lists by
reading $\Theta(n^{1/2})$ elements from each sorted list at a time
(and stopping the recursion when we reach lists of this size).
This means that the total number of I/Os needed for this oblivious
shuffling is $O((n/n^{1/2})\log^2_{M/B'} (n/B'))$ I/Os.
Given our other assumptions about $B$ and $M$, this shuffling
therefore requires at most $O(n^{1/2})$ I/Os.
In addition, we maintain $D$ caches, $C_1,\ldots,C_D$,
of size $B'\sqrt{n}$ each, one for each level of $F$.
Thus, each cache can be read or written in $O(1)$ I/Os.

\begin{figure*}[hbt!]
\begin{center}
\includegraphics[trim = 1.2in 2.8in 0.8in 2.3in, clip, width=5.5in]{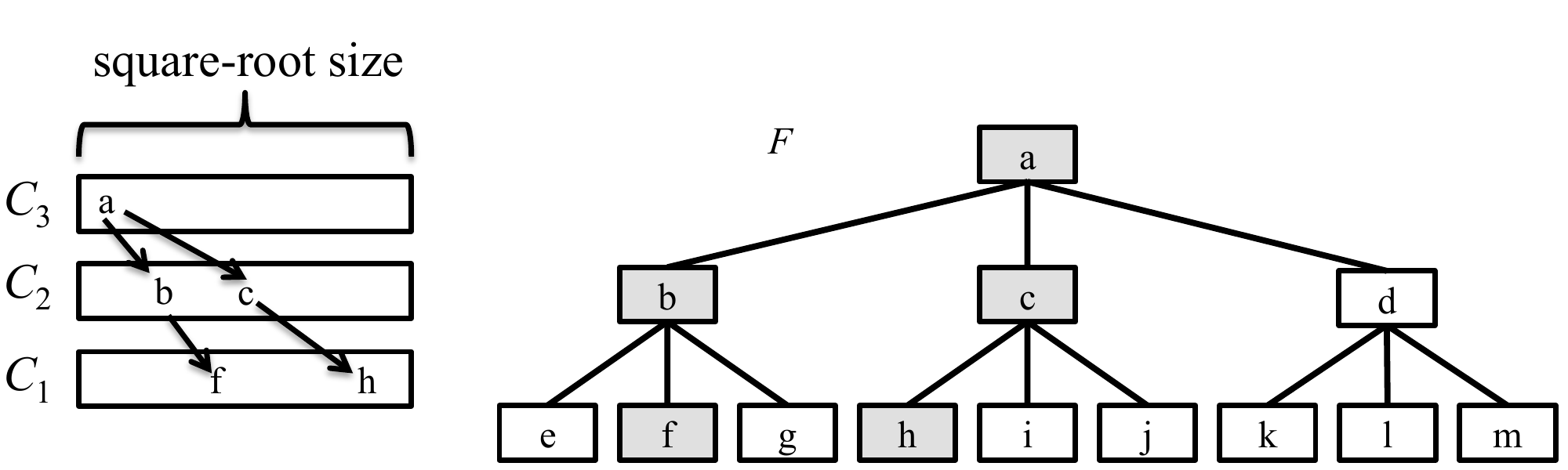}
\end{center}
\caption{\label{fig:fusion2} A B-tree, $F$, 
and a hierarchy of $\Theta(\log n/\log B)$ caches, $C_1,C_2\ldots$ . We
shade the nodes that have been accessed previously in grey. Each cache
is associated with a specific level of $F$.
\copyright~2017 Michael Goodrich. Used with permission.
}
\end{figure*}

Let us consider each type of access, with the design of making it impossible
for Bob to determine even if we are performing a $\Get$ or $\Put$ operation.
In either a $\Get(k)$ or $\Put(k,v)$ operation, we begin with a search for
the key $k$ in $F$.
To perform such a search in $F$, 
we read each of the nodes in a path, $\pi$,
from the root of $F$ to the leaf containing our search key, $k$, or its
predecessor (if $k$ is not in our set, $S$).
For each level, $i$, of $F$ during this search, we first read (as one I/O),
the cache, $C_i$, to see if the $i$-th node, $v_i$, of $\pi$ is in $C_i$.
If $v_i$ is in $C_i$, then we examine it and determine the location of the
next node, $v_{i+1}$, in $\pi$, and we read the next dummy node in $\ell$ (for
the sake of obliviousness, since the location of this dummy node looks
random to Bob and has not been previously accessed).
If $v_i$ is not in $C_i$, then we read it in (this is the first time we are
accessing $v_i$ and this location looks to Bob to be random).
We note that each such node is of size $B'$;  hence, it 
and each cache can be read or written in $O(1)$ I/Os.
(See Figure~\ref{fig:fusion2}.)

After we have completed the reading of 
all the nodes in $\pi$, 
we can perform any updating as necessary for these nodes so as to perform
the functionality of our $\Get(k)$ or $\Put(k,v)$ operation.
Without going into details
(e.g., see~\cite{Comer:1979,Cormen:2009:IAT:1614191,Goodrich:2014:ADA}),
either of these operations will either involve no structural changes
to $F$ or will involve our adding $O(1)$ nodes per level of $F$.
Let $\pi'$ denote the set of updated nodes in $F$ (which we can determine
using Alice's private memory).
Note that each internal node of $F$ in $\pi'$ will including pointers to
existing nodes in $F$, but these have not yet been accessed yet and we have
not revealed any information about them to Bob.
We then write the nodes of $\pi'$ out
to Bob's storage, placing each node, $v_i$, on level of $i$ of $\pi$,
in the cache, $C_i$,
using $O(1)$ I/Os for each level of $F$.
In fact, we pad this set so that we always write the same number of $O(1)$
nodes to each $C_i$, based on standard update rules
for B-trees
(e.g., see~\cite{Comer:1979,Cormen:2009:IAT:1614191,Goodrich:2014:ADA}).
We perform this process in a bottom-up leaf-to-root fastion, so that we can
inductively always be able to know the locations for the child nodes
for any node in $F$ (even if that node is in a cache and some of its children
are in the lower-level cache). 
Thus, we can determine the locations in
Bob's storage for any root-to-leaf path in $F$ by reading the nodes
and caches in Bob's storage
sequentially starting from the root.
After we have completed $\lceil\sqrt{n}\rceil$ accesses of $F$ in this
manner, we rebuild and reshuffle $F$ and a new linked list, $\ell$, and repeat
this B-tree access procedure.
This rebuilding and reshuffling requires $O({n}^{1/2})$ I/Os,
as described above.
Thus, we have the following.

\begin{theorem}
\label{thm:sqrt}
Suppose we have a set, $S$, of up to $n$ items, 
where each item is of size at most $B'$,
where $B'=B^{1/4}$, and $n$ is $O(B^{3/2})$.
Then our B-tree OS solution can implement an 
oblivious storage for $S$ that has an I/O overhead of 
$O(\log n/\log B)=O(1)$, with high probability.
This simulation is statistically secure, even for non-isogrammic access
sequences.
\end{theorem}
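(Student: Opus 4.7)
\medskip

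\noindent\textbf{Proof proposal.}
The plan is to verify three things in order: the I/O overhead per operation, the capacity of the caches and dummy list so that the scheme never overflows within an epoch, and statistical indistinguishability of the physical access pattern under any two input sequences. Since $n = O(B^{3/2})$ and $B' = B^{1/4}$, the depth of $F$ is $D = O(\log n/\log B) = O(1)$, and each internal node, each item, each cache $C_i$, and each block of the dummy list $\ell$ fits in $O(1)$ message blocks of size $B$. Thus a single root-to-leaf descent touches $O(D) = O(1)$ physical locations per level, and the subsequent bottom-up writeback into the caches likewise costs $O(D) = O(1)$ I/Os. The reshuffle that rebuilds $F$ and a fresh $\ell$ every $\lceil\sqrt{n}\rceil$ operations uses $O(\sqrt{n})$ I/Os by the Goodrich--Mitzenmacher oblivious shuffling bound already established in the discussion above, so its amortized contribution is $O(1)$ per access, giving the claimed $O(1)$ overhead.

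Next I would check that nothing overflows during an epoch of $\lceil\sqrt{n}\rceil$ operations. Each access adds $O(1)$ new or updated nodes to each cache $C_i$ (by the standard B-tree update rules), so after $\sqrt{n}$ operations the cache $C_i$ contains $O(\sqrt{n})$ nodes of size $B'$; since $|C_i| = B'\sqrt{n}$ this fits, and because the cache fits in $O(1)$ message blocks it can be scanned in $O(1)$ I/Os to detect whether $v_i$ is already cached. For the shuffled region, each descent reads exactly one previously-untouched cell per level: it is a fresh B-tree node in $F$ on the very first visit to that node in the epoch, and a fresh cell from $\ell$ (of which $D\lceil\sqrt{n}\rceil$ were pre-allocated) on every subsequent visit. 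Thus the total number of distinct cells touched in the shuffled region is at most $D\lceil\sqrt{n}\rceil$, matching the size of $\ell$ plus the nodes of $F$, so we never need to reuse a cell within an epoch.

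Finally, for statistical security I would play the usual indistinguishability game: fix any two access sequences $\sigma_1,\sigma_2$ of equal length, and argue that the sequence of physical locations Bob sees during an epoch has the same distribution under both. The key observations are (i) the cache touches are deterministic and identical for every operation: each level's cache is read and then written in full in fixed locations, so they leak nothing; (ii) for each of the $D$ per-access reads in the shuffled region, the cell read is uniformly distributed over the cells in that region that have not yet been touched in the current epoch, because $F$ and $\ell$ were placed by an oblivious random shuffle and because our protocol guarantees that each physical cell is accessed at most once per epoch (a node is fetched only on its first visit, and otherwise a fresh dummy from $\ell$ is fetched instead); (iii) the reshuffle at epoch boundaries resamples this random placement independently. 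Hence the joint distribution of Bob's view in an epoch depends only on the number of operations performed and the bookkeeping of which physical cells have already been touched, not on the keys or values in $\sigma_j$; a simple hybrid over epochs then yields statistical indistinguishability. The main obstacle is step (ii): one must argue carefully that replacing a repeated logical access by a fresh dummy in $\ell$ really does produce a uniform and history-independent draw over untouched cells, which in turn relies on the fact that $F$ and $\ell$ were interleaved by the same oblivious shuffle and that Alice's cache-lookup decision is computed entirely inside client memory and never revealed by a distinguishable access pattern.
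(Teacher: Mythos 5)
Your proposal is correct and follows essentially the same route as the paper's proof: amortized I/O accounting with the $O(\sqrt{n})$-cost reshuffle every $\lceil\sqrt{n}\rceil$ accesses, constant depth since $n=O(B^{3/2})$, and security from the fact that every access touches each cache identically and reads one previously-untouched, uniformly-random cell per level of the obliviously shuffled region (a real node on first visit, a dummy from $\ell$ otherwise), independent of whether the operation is a $\Get$ or a $\Put$. Your added detail on epoch capacity (the $D\lceil\sqrt{n}\rceil$ dummies sufficing) and the explicit hybrid over epochs only makes explicit what the paper leaves implicit.
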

\begin{proof}
With respect to the security of this method, note that each access
involves a search of all the caches of size $\sqrt{n}$ and an access to a
distinct random location (which is either a real node or a dummy node 
that Bob cannot tell apart)
for each level of a shuffled B-tree, $F$,
which was shuffled obliviously and has every possible permutation of
its nodes on that level as being equally likely.
We then access each cache for every level of $F$ in a bottom-up fashion.
Moreover, both the top-down and bottom-up phases of this computation involve
the same form of access irrespective of whether we are performing
a $\Get$ or $\Put$ operation.
Thus, the adversary, Bob, can learn nothing about Alice's access sequence
based on observing her access pattern.
That is, in terms of the security game, Bob is unable to
distinguish between two access sequences, $\sigma_1$ and $\sigma_2$,
of length $N$ for sets of up to $n$ items.

Since $D$ is $O(\log n/\log B)$, 
and $B$ is large enough for us to read an entire cache with one I/O,
it is easy to see that each access in this simulation requires
$O(\log n/\log B)=O(1)$ I/Os.
In addition, after we have performed $O(\sqrt{n})$ such accesses,
we do a rebuilding action that requires $O(n^{1/2})$ I/Os;
hence, this adds an amortized $O(1)$ 
number of I/Os for each of the previous
$O(\sqrt{n})$ accesses.
Thus, the total I/O overhead is $O(\log n/\log B)=O(1)$, with high
probability (where this probability is dependent only on the algorithm
we use for oblivious shuffling, e.g., see~\cite{Goodrich2011}).
\end{proof}

\ifFull
This solution is a crucial component of our general isogrammic OS
solution, which we describe next.
\fi

\section{Isogrammic Oblivious Storage}
\label{sec:isogrammic-os}
In this section, we describe our isogrammic OS scheme, which is able
to obfuscate any isogrammic access sequence with statistical security,
achieving an I/O overhead that is $O(\log n/\log B)$ with high
probability.
Our construction involves yet another use of B-trees, as a primary
search structure, as well as repeated uses of our B-tree OS for small
sets from Section~\ref{sec:b-tree}.

\begin{figure*}[htb!]
\begin{center}
\includegraphics[trim = 0.4in 2.7in 3.4in 1.8in, clip, width=4.2in]{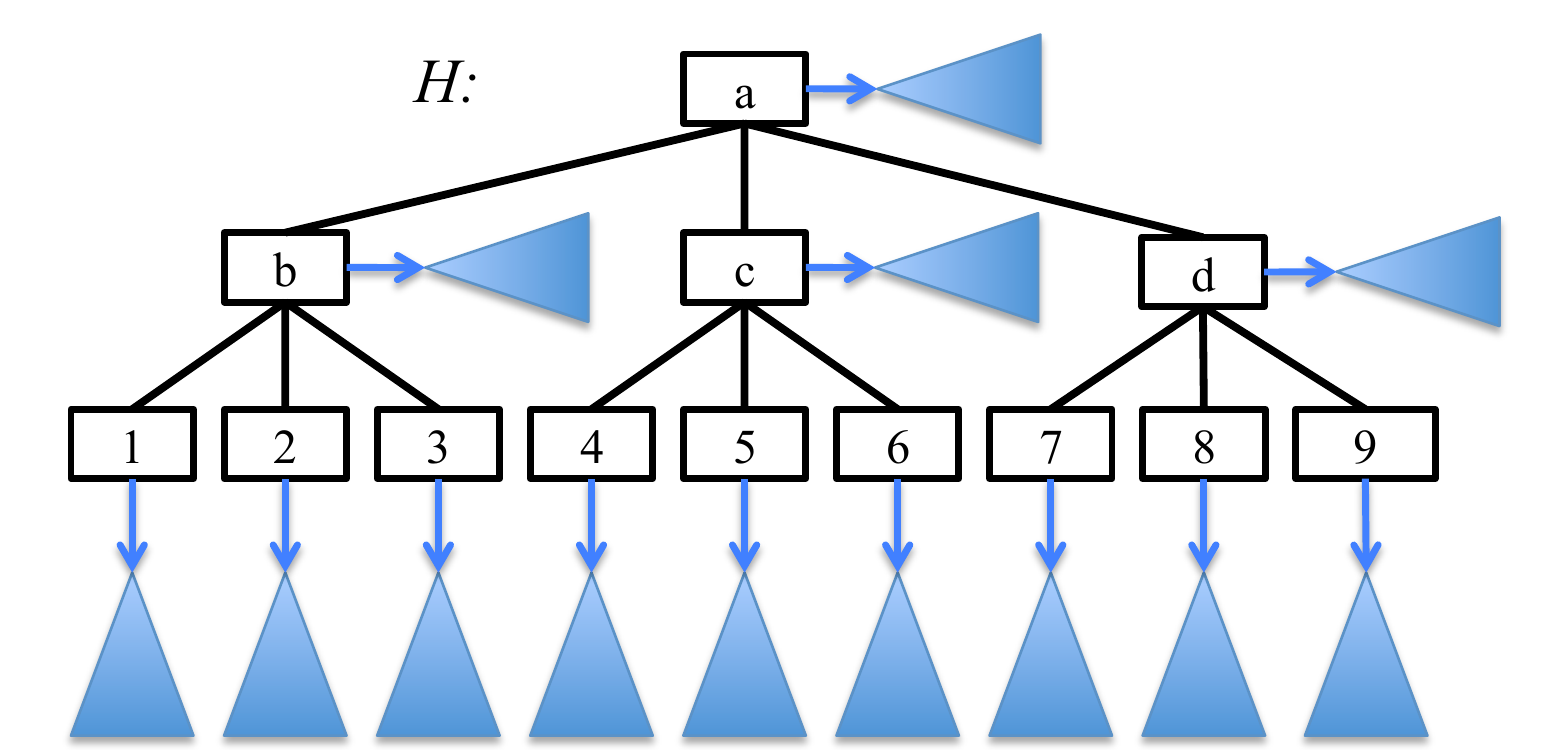}
\end{center}
\vspace*{-4pt}
\caption{\label{fig:isogrammic} An illustration of our isogrammic OS.
The B-tree tree, $H$, is shown in black. Each bucket, which
implements our B-tree OS for small sets, is shown as a blue triangle.
That is, each triangle represents a bucket of up to $B^{3/2}$ items,
which are accessed according to our B-tree OS method of 
Section~\ref{sec:b-tree}.
\copyright~2017 Michael Goodrich. Used with permission.
}
\end{figure*}

Let $n$ be the size of a set for which we wish to support
an oblivious storage (OS),
and let
$H$ be a static B-tree of height $O(\log n/\log B)$ with branching factor
$B'=B^{1/4}$, such that $H$ has $n/B$ leaves.
For every node, $u$ in $H$, including both the internal nodes and leaves,
we store a ``bucket,'' $b_u$, which maintains an instance of our B-tree OS,
as described above in Section~\ref{sec:b-tree},
and let each such bucket have capacity $4L$, where $L=B^{3/2}$,
except for leaves, which each have
capacity $8L$.
These buckets are used to store $(k,v)$ items that are in the current set,
i.e., items for which we have processed a $\Put(k,v)$ operation and have yet
to perform a $\Get(k)$ operation.
See Figure~\ref{fig:isogrammic}.

Recall that in an isogrammic access sequence we are given a sequence of
$\Put(k,v)$ and
$\Get(k)$ such that $\Get(k)$ operations
always have an item to return (i.e., there is a previous matching 
$\Put(k,v)$ operation) and $\Put(k,v)$ operations never try to insert an item
whose key matches the key of an existing item.
More importantly, every key contains a random nonce component of 
at least $\lceil \log n\rceil$ bits.
We use these random nonces as the addresses for where items should go
in $H$. Namely, we maintain the following invariant throughout our OS
simulation:
\begin{itemize}
\item
For each item, $(k,v)$, in our current set of items, $(k,v)$
is stored in exactly one bucket, $b_u$, for a node, $u$, on the root-to-leaf
search path in $H$ for the random part of $k$.
\end{itemize}
Given this invariant, let us describe how we process $\Put$ and $\Get$
operations.

For a $\Put(k,v)$ operation, we add $(k,v)$ to the bucket, $b_r$, 
for the root, $r$, of $H$,
using the B-tree OS method described in
Section~\ref{sec:b-tree}.
Note that this satisfies our invariant for storing items in $H$,
that is, storing an item in bucket for the root implies that it is
stored in the root-to-leaf search path for the random part of its key.
(We will describe later what we do when the root bucket becomes full,
but that too will satisfy our invariant.)
Then, for the sake of obliviousness (so Bob cannot tell whether
this operation is a $\Get$ or $\Put$),
we uniformly and independently choose a random key, $k'$, 
and traverse the root-to-leaf path in
$H$ for $k'$, performing a search for $k'$ in the bucket, $b_u$, for
each node $u$ on this path, using
the B-tree OS method described in
Section~\ref{sec:b-tree}.
Alice just ``throws away'' the results of these searches, but, of
course, Bob doesn't know this.

For any given $\Get(k)$ operation,
we begin, for the sake of obliviousness, by inserting a 
dummy item, $(k',e)$, in the bucket, $b_r$, for the root, $r$, 
of $H$, where $e$ is a special ``empty'' value (that nevertheless 
has the same size as any other value) and $k'$ is a random key,
using the fusion-tree OS method described in
Section~\ref{sec:b-tree}.
So as to distinguish this type of dummy item from others, we refer to each
such dummy item as an \emph{original} dummy item.
We then traverse the root-to-leaf path, $\pi$, for (the random part of) $k$ in 
$H$, and, for each node, $u$, in $\pi$, we search in the bucket,
$b_u$, for $u$,
to see if the key-vaue pair for $k$ is in this bucket, using
the B-tree OS scheme described above in
Section~\ref{sec:b-tree}.
By our invariant, the item, $(k,v)$, must be stored in the bucket
for one of the nodes in the path $\pi$.
Note that we search 
in the bucket for every node in $\pi$, even after we have found and removed the 
key-value pair, $(k,v)$. 
Because we are simulating an isogrammic access sequence,
there will be one bucket with this item, but we search all the
buckets for the sake of obliviousness.

An important consequence of the above methods
and the fact that we are simulating an isogrammic access sequence is
that each traversal of a path in $H$ is determined by a random nonce 
that is chosen uniformly at random and is independent of every other
nonce used to do a search in $H$.
Thus, the server, Bob, learns nothing about Alice's access pattern from
these searches.
In addition,
as we will see shortly, 
the server cannot determine where any item,
$(k,v)$, is actually stored, because the random part of the
key $k$ is only revealed when we do a $\Get(k)$ operation and $\Put$
operations never reveal the locations of their keys.
Moreover, we maintain the fact that the server doesn't know the actual
location of any item,
along with our invariant,
even as bucket for a node, $u$, becomes full and needs to 
have its items
distributed to its children.

Periodically, so as to avoid overflowing buckets, 
we move items from a bucket, $b_u$, stored 
at a node $u$ in $H$ to $u$'s children, in a process we call
a \emph{flush} operation.
In particular, we flush the root node, $r$, every $L$ 
$\Put$ or $\Get$ operations.
We flush each internal node, $u$, after $u$ has received $B'$
flushes from its parent, which each involve inserting exactly $4L/B'$ real and
dummy items (including new dummy items) into the bucket for $u$.
Because of this functionality, and the fact that we are moving items
based on random keys, the number of real and
original dummy items in the bucket, $b_u$, at a time when we are flusing
a node $u$ at depth $i$ is expected to be $L$, and we maintain
it to be at most $4L$.
Also, note that
we will periodically perform flush operations across all the
nodes on a given level of $H$ at any given time when flush
operations occur, which is the main  reason why our I/O overhead bounds are
amortized.
We don't flush the leaf nodes in $H$, however.
Instead,
after every leaf, $u$, in $H$ has received $B'$ flushes, we perform an
oblivious compression to compress out a sufficient number of dummy items so that
the number of real and dummy items in $u$'s bucket is $4L$.
Thus, the bucket for a leaf never grows to have more than $8L$ real and dummy
items.
If, at the time we are compressing the contents of a leaf bucket,
we determine that there are more than $4L$ real items being stored in such a
bucket, which, as we show, is an event that occurs with low probability, 
then we 
restart the entire OS simulation. 
Such an event doesn't compromise privacy, since
it depends only on random keys, not Alice's data or access sequence. 
Thus, 
doing a restart just impacts performance, 
but because restarts are so improbable,
our I/O bounds still hold with high probability.

At a high-level, our method for doing a flush operation at a node, $u$, in $H$
has a similar structure to an analogous operation in the 
Path ORAM scheme~\cite{Stefanov:2013}, as well as in the paper mentioned above that
is currently under submission for ORAM simulation when $B$ and $M$ are
both very small.
The details for our flush operation here are different than both of these
works, however,
in that our flush method depends crucially on the B-tree OS method of 
Section~\ref{sec:b-tree}.
\begin{enumerate}
\item
We obliviously shuffle the real and original dummy items
of $b_u$ into an array, $A$, of size $4L$, stored at the server.
This step will never overflow $A$ (because of how we perform the rest
of the steps in a flush operation).
This step can be done using known oblivious 
shuffling methods
(e.g., see~\cite{Goodrich2011}), which add just a constant I/O
overhead factor.
\item
For each child, $x_i$, $i=1,2,\ldots,B'$,
of $u$, we create an array, $A_i$, of size $4L/B'$.
\item
We obliviously sort the real and original dummy  
items from $A$ into the arrays, 
$A_1,\ldots,A_\ell$, according the keys for these items, so that the
item, $(k,v)$, goes to the array $A_i$ if the next $O(\log B')$ bits
of the random part of key $k$ would direct a search for $k$ to the child $x_i$.
We perform this oblivious sorting step so that if there are fewer than 
$4L/B'$ items destined for any array, $A_i$, we 
pad the array with (new) dummy items to bring the number of items destined
to each array, $A_i$, to be exactly $4L/B'$.
However, if we determine from this oblivious sorting step
that there are more than $4L/B'$ real and original dummy 
items destined for any array, $A_i$,
which (as we show) is an event that occurs with low probability, then
we restart the entire OS simulation.
Because this step is done obliviously and search keys are random (hence, 
they never depend on
Alice's data values or access pattern), even if we restart,
Bob learns nothing about Alice's access sequence during this step.
So, let us assume that we don't restart.
This step can be done using known oblivious sorting, padding, and
partitioning methods 
(e.g., see~\cite{Goodrich2011}), which add only a constant I/O
overhead factor.
\item
For each real and dummy item (including both original and new dummy
items), $(k,v)$, in each $A_i$,  we insert 
$(k,v)$ into the bucket $b_{x_i}$ using the B-tree OS method of
Section~\ref{sec:b-tree}. 
This method works only for small sets, but, of course, the number of items
in each bucket determines such a small set.
\end{enumerate}

The first important thing to note about a flush operation is that it
is guaranteed to preserve our invariant that each item, $(k,v)$, is
stored in the bucket of a node in $H$ on the root-to-leaf path
determined by the random part of $k$.
Moreover, because we move real and original dummy items to children
nodes obliviously, in spite of our invariant, the server never
knows where an item, $(k,v)$, is stored; hence, the server can never
differentiate two access sequences more than at random.

Let us analyze the complexity of a flush operation.
Since we flush the root every $L$ steps, and we flush every other
node, $u$, at depth $i$, 
after it has received $B'$ flushes, and both real and original
dummy items are mapped to $u$ only if the first $i\log B'$ bits 
of each of their random keys matches $u$'s  address, 
the expected number of real and original dummy items stored in the
bucket for $u$ is at most $L$ at the time we flush $u$.
In fact, this is a rather conservative estimate, since it assumes
that none of these items were removed as a result of $\Get$
operations.
More importantly, we have the following.

\begin{lemma}
The number, $f$, of real and original dummy items flushed from a
node, $u$, to one of its children, $x_i$, is never more than $4L/B'$, with 
high probability.
Likewise, a leaf in $H$ will never receive 
more than $4L$ real items, with high probability.
\end{lemma}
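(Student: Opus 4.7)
The plan is to establish both inequalities via a Chernoff-style tail bound on the random nonces carried by items, followed by a union bound over all flush (or compression) events. The crucial property I would exploit is the isogrammicity of the input: every item inserted at the root, whether a real item arriving with a $\Put$ or an original dummy inserted during a $\Get$, carries an independent, uniformly random key nonce of at least $\lceil\log n\rceil$ bits, and this nonce alone determines the item's path through $H$.

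First I would pin down the deterministic flush schedule. The root flushes every $L$ operations and, inductively, a node $u$ at depth $i$ flushes every $L(B')^i$ operations, since $u$ awaits $B'$ flushes from its parent. Consequently, every item sitting in $b_u$ at the moment $u$ flushes must have entered the root within the last $L(B')^i$ operations---older items would already have been flushed past $u$---and its first $i\log B'$ nonce bits must match $u$'s address in $H$. Among such items, those destined for a specific child $x_j$ of $u$ are exactly those whose next $\log B'$ bits encode $x_j$. Because nonces are mutually independent and uniform, the number of real plus original dummy items flushed from $u$ to $x_j$ is stochastically dominated by a binomial variable $\mathrm{Bin}\bigl(L(B')^i,(B')^{-(i+1)}\bigr)$ with mean $L/B'$.

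A standard Chernoff bound then gives $\Pr[X > 4L/B'] \le e^{-\Omega(L/B')} = e^{-\Omega(B^{5/4})}$, and since $B = \Omega(\log n)$ we have $B^{5/4} = \omega(\log n)$, so this probability is $o(n^{-c})$ for every constant $c$. A union bound over polynomially many flush events---there are $O(n/B)$ nodes in $H$, each flushing at most $N/L$ times during a polynomially-long access sequence---preserves the high-probability guarantee. The leaf bound follows the same template one level deeper: between two consecutive compressions of a leaf, a span of $L(B')^D$ root operations elapses, so the number of real items whose nonces route them to that leaf during this window is dominated by $\mathrm{Bin}\bigl(L(B')^D,(B')^{-D}\bigr)$ with mean $L$, and Chernoff gives a tail of $e^{-\Omega(L)} = e^{-\Omega(B^{3/2})}$, again polynomially small.

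The main subtlety to address is independence: conditioning on an item appearing in $b_u$ fixes only the first $i\log B'$ bits of its nonce, so the remaining bits that drive the descent into children and descendants remain uniform and independent across items. Because the lemma asks only for an upper bound on bucket traffic, we can safely ignore that $\Get$ operations remove real items---removals only tighten the bound---and we need not track which real items are still live. This, combined with the Chernoff and union-bound arguments above, yields both claims.
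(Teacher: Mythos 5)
Your treatment of the first claim is essentially the paper's argument, carried out in more detail: the paper simply asserts that $f$ is a sum of independent indicators with mean at most $L/B'=B^{5/4}$ and applies a Chernoff bound plus a union bound over nodes and flush events. Your binomial domination $\mathrm{Bin}\bigl(L(B')^i,(B')^{-(i+1)}\bigr)$, the explicit flush schedule, and the observation that conditioning on membership in $b_u$ fixes only the first $i\log B'$ nonce bits all make that one-liner rigorous. This half is correct and, if anything, more careful than the paper.

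The leaf half has a genuine gap. You bound the number of real items that \emph{arrive} at a leaf during a single window between compressions by $\mathrm{Bin}\bigl(L(B')^D,(B')^{-D}\bigr)$, but the quantity the algorithm needs to control is the number of real items \emph{stored} in the leaf bucket at compression time, and compression removes only dummy items --- real items persist across compression epochs. So the stored count is the accumulation of arrivals over all previous windows minus removals, and a per-window arrival bound does not control it: over a polynomial-length access sequence the expected cumulative number of real items ever routed to a fixed leaf is $\Theta(N B/n)$, which dwarfs $4L$ for large $N$. This is exactly where your remark that ``we can safely ignore that $\Get$ operations remove real items'' breaks down: ignoring removals is harmless for internal nodes, whose buckets are emptied at every flush, but for leaves the removals (equivalently, the fact that the live set never exceeds $n$ items, each carrying an independent uniform nonce) are essential. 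The correct argument bounds the real items in a leaf at any time by the number of \emph{live} items whose nonces route to that leaf, a sum of at most $n$ independent indicators with mean $n\cdot(B/n)=B\ll L$, to which Chernoff applies with an even better exponent than the one you state. (The paper itself only says the leaf case is ``similar,'' so it is terse here too, but its sum-of-independent-indicators framing does extend to the live-set argument, whereas your window-based substitute does not.)
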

\begin{proof}
The expected value of $f$, which can be expressed as a sum of
independent indicator
random variables, is at most 
\[
L/B'=B^{3/2-1/4}=B^{5/4}\ge d\log^{5/4} n,
\]
for a constant, $d\ge 3$, since we are assuming that $B$ is
$\Omega(\log n)$.
Thus, by a
Chernoff bound (e.g., see~\cite{mitzenmacher2005probability}),
\[
\Pr(f\ge 4L/B') \le e^{-L/B} \le e^{-d\log^{5/4} n} \le n^{-3\log^{1/4} n}.
\]
The probability bound argument for a leaf in $H$ is similar.
The lemma follows, then, by a union bound across all nodes of $H$ and
the polynomial length of access sequences.
\end{proof}

Thus, with high probability, we never need to do a restart as a
result of a potential overflow during a flush operation.

\begin{theorem}
\label{thm:iso-fusion}
We can obliviously simulate an isogrammic sequence of a polynomial number of
$\Put(k,v)$ and $\Get(k)$ operations,
for a data set of size $n$,
with an I/O overhead of $O(\log n/\log B)$,
with high probability.
Moreover,
this simulation is statistically secure.
\end{theorem}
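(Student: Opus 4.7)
The plan is to bound the cost of each access operation and each scheduled flush, then combine them with a standard amortization argument over the levels of $H$, while relying on the previously established lemma that buckets do not overflow with high probability.

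First, I would analyze the ``direct'' cost per access. A $\Put(k,v)$ or $\Get(k)$ operation touches exactly one root-to-leaf path in $H$, which has length $O(\log n/\log B)$, and performs $O(1)$ calls to the small-set B-tree OS of Theorem~\ref{thm:sqrt} at each node on that path. Since each bucket has size $O(L)=O(B^{3/2})$ and items have size $B'=B^{1/4}$, Theorem~\ref{thm:sqrt} guarantees an $O(1)$ I/O cost per bucket access with high probability. Summing over the path yields $O(\log n/\log B)$ I/Os of direct cost per operation, w.h.p.

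Next I would analyze the amortized flush cost, level by level. A flush at node $u$ processes its bucket contents of size $O(L)$: an oblivious shuffle into an array $A$ of size $4L$, an oblivious sort/partition of $A$ into $B'$ sub-arrays of size $4L/B'$, and then at most $4L/B'$ inserts into each child bucket via Theorem~\ref{thm:sqrt}. Each of these three sub-steps costs $O(L)$ I/Os (oblivious shuffling and sorting for items packed $B/B'$ per block contribute only constant-factor overhead, and the $O(L)$ child-bucket inserts are each $O(1)$ I/Os w.h.p.). Since the root is flushed once every $L$ operations and a node at depth $i$ is flushed once every $(B')^i\cdot L$ operations, but level $i$ contains $(B')^i$ nodes, the total flush cost charged to level $i$ per input operation is $O((B')^i\cdot L)/((B')^i\cdot L)=O(1)$. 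Summing over the $O(\log n/\log B)$ levels gives $O(\log n/\log B)$ amortized I/Os per operation. Leaves, which are compressed rather than flushed, contribute the same asymptotic cost by the same counting since the compression processes $O(L)$ items. Combining direct and flush costs yields the claimed $O(\log n/\log B)$ overhead, w.h.p.; I would also observe that restarts triggered by bucket overflow at an internal node, a child partition, or a leaf compression occur only with the low probability bounded in the preceding lemma, so a union bound over the polynomial-length access sequence preserves the w.h.p.\ guarantee.

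For statistical security, I would argue that Bob's view is a deterministic function of data that is independent of Alice's access sequence. Every operation performs exactly the same template: a single (real or dummy) insertion at the root bucket followed by a full search of a single root-to-leaf path, regardless of whether it is a $\Put$ or a $\Get$. The search path for a $\Put$ uses an independently chosen fresh random nonce, and the path for a $\Get$ uses the random part of a key that, by the isogrammic assumption, was generated independently and uniformly when its matching $\Put$ was issued and has not previously been revealed to Bob. Thus the sequence of path choices is an i.i.d.\ uniform sequence that is independent of which of $\sigma_1,\sigma_2$ Alice chose in the security game. Flushes are triggered on a fixed deterministic schedule based only on operation counts, and every bucket operation they invoke is itself oblivious by Theorem~\ref{thm:sqrt}; the oblivious shuffle and sort inside a flush depend only on random nonces, and a restart is likewise triggered only by the random nonces, so none of these reveal anything about Alice's choice. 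Hence Bob's success probability in the security game is exactly $1/2$, establishing statistical security.

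The main obstacle I expect is making the amortized flush analysis tight, specifically getting the per-operation charge at every level to be $O(1)$: it requires the oblivious shuffle/sort/partition substeps to cost only $O(L)$ I/Os (which uses the packing of $B^{3/4}$ items per block together with the Goodrich--Mitzenmacher shuffler being efficient in this regime) and also requires the child-bucket insertions to cost $O(1)$ I/Os each, which is exactly what Theorem~\ref{thm:sqrt} provides but only because each bucket is kept at size $O(B^{3/2})$ by the non-overflow lemma. Tying these together---and verifying that the same accounting goes through for the leaf-compression step---is the step where I would be most careful.
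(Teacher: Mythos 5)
Your proof is correct and follows essentially the same route as the paper: per-access cost is the $O(\log n/\log B)$ path length times the $O(1)$ bucket cost from Theorem~\ref{thm:sqrt}, and security follows from independent random path choices composed with the statistically secure bucket OS and data-oblivious flushes. You are in fact more careful than the paper, which asserts the I/O bound directly from the tree height and leaves the level-by-level $O(1)$ amortized flush accounting implicit.
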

\begin{proof}
The height of the tree, $H$, is $O(\log n/\log B)$.
Thus, by Theorem~\ref{thm:sqrt}, with high probability,
the I/O overhead is proportional to a constant times $O(\log n/\log B)$,
which is itself $O(\log n/\log B)$.
For the security claim,
consider an instance of the simulation game, where Bob chooses two
isogrammic access sequences, $\sigma_1$ and $\sigma_2$, of length $N$
for a key set of size $n$, and gives them to Alice, who then chooses
one uniformly at random and simulates it according to the isogrammic
OS scheme.
Each access that she does involves accessing a sequence of nodes of $H$
determined by random
keys and for each node doing a lookup in an OS scheme that is 
itself statistically secure, by Theorem~\ref{thm:sqrt}.
In addition, $\Put$ operations add items at the top bucket and are
obfuscated with data-oblivious flush operations.
Therefore, Bob is not able to distinguish between $\sigma_1$ and $\sigma_2$
any better than at random.
\end{proof}

\section{Our BIOS ORAM Algorithm}
Putting the above pieces together, then, gives us the following theorem,
which is the main result of this paper.

\begin{theorem}
\label{thm:oram}
Given a RAM algorithm, $\mathcal A$, with memory size, $n$,
where $n$ is a power of 2,
we can simulate the memory accesses of $\mathcal A$ in an oblivious fashion
that achieves statistical security, such that, with high probability,
the I/O overhead is $O(\log^2 n/\log^2 B)$ for a client-side
private memory of size $M\ge B$ and
messages of size $B\ge 3\log n$.
\end{theorem}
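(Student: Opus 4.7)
The plan is to assemble the theorem as a two-stage composition of the results already established in the paper. First I would invoke Theorem~\ref{thm:isogrammic-b-tree} to transform Alice's sequence of $\Read$ and $\Write$ operations into an equivalent isogrammic sequence of $\Get$ and $\Put$ operations using a B-tree of branching factor $B' = B^{1/4}$. Each original RAM operation is replaced by $O(\log n/\log B)$ isogrammic dictionary operations whose keys are augmented with fresh $\lceil\log n\rceil$-bit nonces and whose values are sub-blocks of size $O(B')$, so they fit in single messages of size $B$. The condition $B \ge 3\log n$ is used here to ensure that a nonce of $\lceil\log n\rceil$ bits, together with a node identifier, fits comfortably in a key word, and it also guarantees that the bucket-capacity Chernoff bound invoked in the isogrammic OS holds.

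Next I would feed the resulting isogrammic access sequence into the construction of Section~\ref{sec:isogrammic-os} and apply Theorem~\ref{thm:iso-fusion}, which obliviously simulates an isogrammic sequence of polynomial length over a set of size $n$ with an I/O overhead of $O(\log n/\log B)$ with high probability, and does so in a statistically secure way. The sub-routine this relies on, the small-set B-tree OS of Theorem~\ref{thm:sqrt}, requires only that $M \ge B$ so that a single cache or node (each of size at most $O(B)$) can be read in one I/O and that the oblivious shuffling phases fit in client memory; both are assured by the hypotheses of the theorem.

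Multiplying the two overheads gives a total I/O overhead of $O((\log n/\log B) \cdot (\log n/\log B)) = O(\log^2 n/\log^2 B)$. The high-probability guarantee follows by a union bound: Theorem~\ref{thm:iso-fusion} holds with probability at least $1 - 1/n^c$, and the polynomial length of Alice's access sequence (she makes at most polynomially many isogrammic operations) does not affect the asymptotic bound.

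For the security claim, I would argue compositionally. By Theorem~\ref{thm:isogrammic-b-tree}, the isogrammic sequence produced from $\sigma_j$ has a length and a distribution of keys that are independent of the choice $j \in \{1,2\}$, since each key is a fresh independent random nonce paired with a B-tree node name. By Theorem~\ref{thm:iso-fusion}, the server's view of the execution of any such isogrammic sequence is statistically independent of which isogrammic sequence is being executed. Hence Bob's advantage in the composed simulation-game is no better than $1/2$, establishing statistical security. The one step that requires a little care is verifying that the fresh nonces introduced in the outer B-tree reduction remain hidden from Bob throughout the inner simulation --- but this is exactly the invariant maintained by the isogrammic OS, which only reveals a key when a $\Get$ is issued against it, by which time the reduction has already re-randomized the nonce for the next access.
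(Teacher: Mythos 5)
Your proposal is correct and follows essentially the same route as the paper: compose Theorem~\ref{thm:isogrammic-b-tree} (the B-tree reduction to an isogrammic sequence) with Theorem~\ref{thm:iso-fusion} (the oblivious simulation of isogrammic sequences), multiply the two $O(\log n/\log B)$ overheads, and inherit statistical security from both components. Your additional remarks on where the hypotheses $B\ge 3\log n$ and $M\ge B$ are used are a reasonable elaboration of details the paper leaves implicit.
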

\begin{proof}
By Theorem~\ref{thm:isogrammic-b-tree},
each access
in $\mathcal A$ gets expanded into $O(\log n/\log B)$ operations in an isogrammic
access sequence, and, 
with high probability, each such operation 
has an overhead of $O(\log n/\log B)$,
by Theorem~\ref{thm:iso-fusion}.
The security claim follows from the security claims of 
Theorems~\ref{thm:isogrammic-b-tree}
and~\ref{thm:iso-fusion}.
\end{proof}

\section{Isogrammic Algorithm Design}
\label{sec:applications}

In this section, we study the expressive power of the isogrammic 
access sequences, showing that it subsumes some previous
specialized design patterns for implementing algorithms in the cloud in a
privacy-preserving way.
Thus, 
by Theorem~\ref{thm:iso-fusion},
any algorithm
designed in this framework, to give rise to an isogrammic
access sequence, can be simulated
in an oblivious fashion to have an I/O overhead that is
$O(\log n/\log B)$, with high probability.

There are a number of previous algorithm-engineering 
design paradigms that can facilitate
privacy-preserving data access in the cloud,
which, as we show, can be reduced to isogrammic access sequences
at only a constant cost per operation.
Thus, the observations made in this section may be of independent interest
for these specialized applications.

\subsection{Simulating Oblivious Data Structures}
The first application we explore is for bounded-degree directed data
structures in the \emph{oblivious data structure} framework of Wang
{\it et al.}~\cite{Wang:2014}. This framework applies to any data structure
that has a small number of ``root'' nodes for tree structures with 
bounded out-degree, such
that updates and accesses are done as a sequence of linked nodes starting
from a root.
Using
a heuristic similar to that used by Wang {\it et al.}~\cite{Wang:2014},
we can make any such access sequence isogrammic.
Namely, let us keep a random nonce, $r_u$, of $\lceil\log n\rceil$
bits for each node, $u$, in our data structure,
and let us assign the key for
accessing a node $u$ to be the pair, $(r_u,u)$.
That is,
any other node that points to $u$ will identify $u$ using the pair $(r_u,u)$.
The important observation is that any access sequence can inductively be able
to access nodes with their nonces, because bounded-degree data
structures are accessed starting from a root node; hence,
any set of node updates performs its operations on a path from a root and we can
update each node on such a path in reverse order 
to have its new random nonce. More
importantly, for each node, $x$, that points to node $u$,
we can also update $x$ to change its pointer to $u$ to have $u$'s
new nonce.
Using such nonces as keys, therefore, gives rise to an
isogrammic access sequence; hence, our result from
Theorem~\ref{thm:iso-fusion} applies to such
scenarios.
This implies the existence of efficient 
oblivious simulations for access sequences
involving stacks, queues, and deques (which have just $\Theta(1)$ node updates
per operation), as well as binary trees, such
as AVL trees and red-black trees (whose updates and searches can be padded
to have $\Theta(\log n)$ node updates per operation).
We summarize as follows.

\begin{theorem}
Any algorithm, $A$, written in the oblivious data structure framework,
with bounded-degree tree nodes reachable from a constant number of ``root'' nodes,
can be implemented as an algorithm, $A'$,
in the isogrammic algorithm design paradigm such that each data access 
in $A$ is translated into $O(1)$ accesses in $A'$.
\end{theorem}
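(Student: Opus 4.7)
The plan is to exhibit an explicit, mechanical translation from $A$ to $A'$ that mirrors the nonce-rewriting trick already used in Theorem~\ref{thm:isogrammic-b-tree}. I will maintain the invariant that every live node $u$ of the data structure is stored as a single item whose key is the pair $(r_u,u)$, where $r_u$ is a uniformly random $\lceil\log n\rceil$-bit nonce, and whose value contains both $u$'s payload and, for every outgoing pointer from $u$ to a child $x$, the \emph{current} pair $(r_x,x)$ identifying $x$. Alice keeps in her private memory the current nonces of the $O(1)$ root nodes; this is the only persistent state she needs.

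To translate a single access of $A$, I first do the read phase: starting from a root whose nonce is in Alice's memory, I issue $\Get((r_u,u))$ for each successive node $u$ on the bounded-degree path traversed by $A$, using the child nonces returned inside the parent's value to form the next key. Then I do the rewrite phase, processing the same nodes in reverse order: for each touched $u$ I draw a fresh nonce $r_u'$, recompute $u$'s value as $A$ would, substitute the new child nonces $r_x'$ for any children $x$ already rewritten, and issue $\Put((r_u',u),\cdot)$. Finally, Alice updates the root nonce(s) in private memory. The per-node cost is exactly one $\Get$ followed by one $\Put$, so each access of $A$ becomes $O(1)$ accesses in $A'$ times the length of the path $A$ would itself have traversed, which is the accounting the statement requires.

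It remains to verify the three isogrammic conditions. Every $\Get((r_u,u))$ is preceded by a matching $\Put((r_u,u),\cdot)$ because the only way the pair $(r_u,u)$ can become reachable (either from a root in Alice's memory or from a parent pointer) is through a rewrite in some earlier step. The nonce component of every $\Put$-key is drawn independently and uniformly from $\{0,1\}^{\lceil\log n\rceil}$, so by a union bound over the polynomial-length access sequence the event that a freshly generated key collides with any currently live key has probability at most $n^{-c}$; conditioned on no collision, no $\Put$ ever reinserts an existing key. The third condition, that keys carry an independent uniform $\lceil\log n\rceil$-bit component, holds by construction.

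The one subtlety, and the main obstacle, is ensuring that the rewrite cascades correctly: each parent must be rewritten \emph{after} its rewritten child so it can embed the child's new nonce, and unvisited children's nonces must be carried through unchanged (this is why Alice must have enough private memory to hold all nonces along the accessed path simultaneously, which is already a standing assumption of the oblivious-data-structure framework). Pointer fields into nodes not on the accessed path are simply copied verbatim, so the invariant is preserved globally. Once these bookkeeping details are handled, the constant-factor-per-node claim follows, and combining the resulting isogrammic sequence with Theorem~\ref{thm:iso-fusion} yields the advertised oblivious simulation.
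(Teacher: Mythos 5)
Your proposal is correct and matches the paper's argument essentially step for step: a fresh random nonce per node, keys of the form $(r_u,u)$, root nonces held client-side, a forward read phase followed by a reverse-order rewrite that propagates new child nonces into parents. The only (harmless) difference is that you argue the no-reinsertion condition probabilistically via a union bound, whereas it actually holds deterministically because each key contains the unique node name $u$ and $u$ is removed by the preceding $\Get$ before being re-$\Put$.
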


For example, we can create an isogrammic queue, $Q$, by using an array, $A$,
of size $n$ and a dummy array, $B$, of size $n$, together with three global
``root'' indices, front, rear, and dummy.  Each time we access $Q$,
for an enqueue, dequeue, or no-op operation, we read 
the front, rear, and dummy variables.
If this is a no-op operation (or this would be an error operation, like doing 
a dequeue from an empty queue), then we next read the next dummy slot in $B$
and we increment the dummy variable.
If this is a valid enqueue, then we increment rear and add the new element
to that location in $A$.
If this is a valid dequeue, then we increment front and read the element from
the previous front location in $A$.
Anytime we wrap around $A$ or $B$, we increment a version counter (associated 
with the global variables), so that we access the cells in $A$ or $B$ by
index and version counter.
This implies that we always access the queue using an isogrammic access
sequence.

\subsection{Compressed Scanning}
In addition, the \emph{compressed-scanning} 
paradigm~\cite{Goodrich2013,Goodrich2014b}
also falls into our framework for isogrammic access sequences.
A compressed-scanning algorithm consists of $t$ rounds, where each round
involves accessing each of the elements of a set, $S$, of $n$
data items exactly once in a read-compute-write operation. 
By introducing random nonces and assigning them to items
for each round, we can easily
transform any algorithm in the compressed-scanning model into an isogrammic
access sequence.
Thus, all of the graph algorithms presented in these 
papers~\cite{Goodrich2013,Goodrich2014b}
can be simulated obliviously with our isogrammic OS scheme. 
We summarize as follows.

\begin{theorem}
Any algorithm, $A$, written in the compressed-scanning framework
can be implemented as an algorithm, $A'$,
in the isogrammic algorithm design paradigm such that each data access 
in $A$ is translated into $O(1)$ accesses in $A'$.
\end{theorem}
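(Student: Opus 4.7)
The plan is to exhibit an explicit compilation $A\mapsto A'$ and then verify that the output access sequence satisfies the three isogrammic conditions (matching $\Put$/$\Get$ pairs, no duplicate keys, random nonces of $\lceil\log n\rceil$ bits). First I would fix notation for the compressed-scanning model: $A$ proceeds in $t$ rounds, and in each round $r$ it touches the $n$ items of $S$ exactly once in some prescribed order, performing a read--compute--write on each. I will denote by $x_{r,i}$ the $i$-th item value produced during round $r$, and I will think of each round as rewriting the entire set $S$ into a new version.

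Next I would describe $A'$. Alice maintains, in her private memory, a small table of ``global'' names: for each logical slot $i\in\{1,\ldots,n\}$ accessed in round $r$, she draws an independent uniform random nonce $\rho_{r,i}$ of $\lceil\log n\rceil$ bits and defines the key $k_{r,i}=(\rho_{r,i},r,i)$. At the boundary between round $r-1$ and round $r$, the scan of $A$ reads $x_{r-1,i}$ and writes the updated value $x_{r,i}$; $A'$ implements this as the pair $\Get(k_{r-1,i}),\ \Put(k_{r,i},x_{r,i})$. For the very first round there is a one-shot initialization phase that $\Put$s the initial values with freshly drawn keys $k_{0,i}$. Since Alice only needs, at any moment, the current-round key table (or, more frugally, she can store just a round seed and regenerate $\rho_{r,i}$ via a PRF), this table fits in her working memory or can itself be stored isogrammically.

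I would then verify the three isogrammic properties. For the matching condition, observe that every $\Get(k_{r-1,i})$ issued in round $r$ is preceded by the $\Put(k_{r-1,i},\cdot)$ issued in round $r-1$ (or by the initial $\Put(k_{0,i},\cdot)$ when $r=1$), and no key is read twice, because the scan visits each slot once per round. For the no-duplicate-insertion condition, each $\Put$ uses a key $k_{r,i}=(\rho_{r,i},r,i)$ whose $(r,i)$ component is unique across the whole execution, so even ignoring the nonce the keys never collide. For the randomness condition, $\rho_{r,i}$ is drawn independently and uniformly from $\{0,1\}^{\lceil\log n\rceil}$, exactly as the isogrammic definition requires. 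Finally, each read--compute--write in $A$ is translated into one $\Get$ followed by one $\Put$ in $A'$, which is $O(1)$ operations per access as claimed.

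The only delicate point, and the one I expect to spend the most care on, is the management of the nonce table between rounds: Alice must know $k_{r-1,i}$ when issuing the $\Get$ for the $i$-th item of round $r$, and simultaneously produce a fresh $k_{r,i}$ for the following $\Put$. Since a compressed scan processes items in a fixed order and handles each item exactly once per round, this can be done by streaming the nonce table slot by slot (either from $M$, or from a second isogrammic store used only for nonces, which by induction contributes only another $O(1)$ factor). Once this bookkeeping is in place, the translation is immediate, and combining with Theorem~\ref{thm:iso-fusion} yields the advertised oblivious simulation of the algorithms of~\cite{Goodrich2013,Goodrich2014b}.
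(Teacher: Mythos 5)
Your construction is exactly the one the paper intends: the paper's own argument is only the one-line remark that ``introducing random nonces and assigning them to items for each round'' makes the compressed-scanning access sequence isogrammic, and your per-round keys $(\rho_{r,i},r,i)$ with one $\Get$/$\Put$ pair per read--compute--write are a correct, more detailed elaboration of that same idea. The only caution is your parenthetical suggestion to regenerate the nonces with a PRF, which would downgrade the scheme from statistical to computational security; stick with truly random nonces streamed from the server (the scan order is fixed and data-independent, so reading the nonce table sequentially leaks nothing) or from a secondary isogrammic store, as in your main line of argument.
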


\section{Conclusion}
In this paper we have shown how to improve the I/O overhead for statistically 
secure ORAM simulation for a wide range of parameterized 
values for the client-side private memory size, $M$, and the size, $B$, of message
blocks.  Our results imply I/O overhead bounds that range from
$O(1)$ to $O(\log^2 n/(\log\log n)^2)$, with high
probability.
For example, we
can achieve an I/O overhead of $O(\log n)$, with high probability,
for statistically secure ORAM simulation 
if $B$ and $M$ are at least $\Omega(2^{\sqrt{\log n}})$, which is
asymptotically smaller than $n^\epsilon$, for any constant $0<\epsilon\le 1$.

For future work, it would be interesting to see if there is a super-logarithmic
lower bound for the I/O overhead for ORAM simulation for cases when $B$
and $M$ are small.
Also, it would interesting to see if it is possible to achieve an I/O
overhead for statistically secure ORAM simulation that is $O(\log n)$ when
$B$ and $M$ are $o(2^{\sqrt{\log n}})$, i.e., asymptotically smaller
than 
$2^{\sqrt{\log n}}$.

\section*{Acknowledgments}
This research was supported in part by
the National Science Foundation under grant 1228639.
This article also reports on work supported by the Defense Advanced
Research Projects Agency (DARPA) under agreement no.~AFRL FA8750-15-2-0092.
The views expressed are those of the authors and do not reflect the
official policy or position of the Department of Defense
or the U.S.~Government.
We would like to thank Eli Upfal and Marina Blanton
for helpful discussions regarding the topics
of this paper.

{\raggedright 
\ifFull
\bibliographystyle{abbrv}
\else
\bibliographystyle{ACM-Reference-Format} 
\balance
\fi
\bibliography{../refs,../iso} 
}

\end{document}